\theoremstyle{definition}
\newtheorem{theorem}{Theorem}
\newtheorem{lemma}[theorem]{Lemma}
\let\openone\undefined
\DeclareMathOperator\AHerm{{\mathbf A}}
\DeclareMathOperator\Herm{{\mathbf H}}
\DeclareMathOperator\Mat{Mat}
\DeclareMathOperator\tr{tr}
\DeclarePairedDelimiter\abs\lvert\rvert
\DeclarePairedDelimiter\exv\langle\rangle
\DeclarePairedDelimiter\norm\lVert\rVert
\newcommand\ee{\mathrm{e}}
\newcommand\dd{\mathrm{d}}
\newcommand\reals{\mathds{R}}
\newcommand\compl{\mathds{C}}
\newcommand\quats{\mathds{H}}
\newcommand\openone{\mathds{1}}
\newenvironment{myenum}
{\begin{itemize}[itemsep=0pt,leftmargin=4em]}{\end{itemize}}
\begin{document}

\title{Quaternionic quantum theory admits universal dynamics only for two-level 
systems}

\author{Jonathan Steinberg}
\email{jonathan.steinberg@student.uni-siegen.de}
\author{H.\ Chau Nguyen}
\email{chau.nguyen@uni-siegen.de}
\author{Matthias Kleinmann}
\email{matthias.kleinmann@uni-siegen.de}
\affiliation{Naturwissenschaftlich--Technische Fakultät,
Universität Siegen, 57068 Siegen, Germany}

\begin{abstract}
We revisit the formulation of quantum mechanics over the quaternions and 
 investigate the dynamical structure within this framework.
Similar to standard complex quantum mechanics, time evolution is then mediated 
 by a unitary operator which can be written as the exponential of the generator 
 of time shifts.
By imposing physical assumptions on the correspondence between the energy 
 observable and the generator of time shifts, we prove that quaternionic 
 quantum theory admits a time evolution only for systems with a quaternionic 
 dimension of at most two.
Applying the same strategy to standard complex quantum theory, we reproduce 
 that the correspondence dictated by the Schrödinger equation is the only 
 possible choice, up to a shift of the global phase.
\end{abstract}

\maketitle

\section{Introduction}
Our understanding of quantum theory has significantly improved by investigating 
 alternatives to quantum theory and analyzing how these alternatives would or 
 would not be at variance with observations or with our expectations on the 
 structure of a physical theory.
Recently, these investigations are often based on the formalism of generalized 
 probabilistic theories, where the fundamental objects are the convex sets of 
 states and measurements.
Different sets of assumptions have been found which are sufficient to single 
 out quantum theory as the only possible theory \cite{Hardy2001, 
 Chiribella2011, Dakic2011, Masanes2011, Hardy2011, Wilce2016, Hohn2017}.
A special role in this set of assumptions plays the analysis of the dynamics of 
 such generalizations of quantum theory, see, for example, 
 Ref.~\onlinecite{Barnum2014}.
Specifically, in quantum mechanics (and also in classical mechanics) there is 
 an intimate relation between the Hamiltonian $H$ as the energy observable and 
 the generator of time shifts $-\frac i\hbar H$ as it occurs in the Schrödinger 
 equation.

Maybe the most notable early alternatives to quantum mechanics that have been 
 studied in great detail are real and quaternionic quantum mechanics.
Those are based on the question, why the wave function in quantum theory is 
 complex valued and how quantum theory would change when real valued or 
 quaternionic valued wave functions are used.
The two main concerns for the real and quaternionic case are the composition of 
 systems via a tensor product and a suitable modification of the Schrödinger 
 equation.
For real quantum theory, both topics lead to basically the same conclusion, 
 namely that there must be a superselection rule \cite{Stueckelberg1960, 
 McKague2009, Aleksandrova2013}.
In quaternionic quantum theory, it difficult to obtain any sensible notion of a 
 tensor product \cite{Horwitz1984, Baez2012, Joyce2001, Ng2007} --- however, 
 the need for composing systems can also be questioned
 \cite{Chiribella2018}.
A consistent dynamics in quaternionic quantum theory has been formulated, 
 however, also at the price of a superselection rule, where only a subspace of 
 all self-adjoint operators can be used as the Hamiltonian of a system 
 \cite{Finkelstein1962}.
A further grain of salt for real and quaternionic quantum theory is that they 
 have been noted to reduce to the complex case under Poincaré symmetry 
 \cite{Moretti2017, Moretti2019}.
Despite of these concerns, Peres \cite{Peres1979} suggested a possible 
 experiment on the basis of noncommuting phases which could reveal the 
 characteristics of quaternionic quantum mechanics.
Corresponding experiments were realized using neutron interferometry 
 \cite{Kaiser1984} and using single photon interferometry \cite{Procopio2017, 
 Adler2017, Procopio2017b}.
An alternative method to test for hypercomplex theories deals with transmission 
 and reflection times in presence of a quaternionic potential 
 step\cite{DeLeo2006}.

In this paper we use the standard quaternionic formulation \cite{Adler1995} of 
 states and observables, that is, states are normalized vectors and observables 
 are self-adjoint matrices over the quaternions.
We ask which dynamical evolution is admissible in this case.
For complex quantum mechanics, the Schrödinger equation implies that the state 
 evolves according to a unitary group parametrized by time.
Each such group is determined by the Hamiltonian of the system.
We seek for a similar construction with the aim to derive a Schrödinger-type 
 equation for quaternionic quantum theory.
In contrast to previous work \cite{Finkelstein1962, Adler1995}, we are 
 interested in the universal case where the set of Hamiltonians is 
 unrestricted, that is, every self-adjoint operator must induce some dynamics.
We find that this is only possible for one-level or two-level systems and that 
 the corresponding Schrödinger-type equation is necessarily of the form
\begin{equation}
 \hbar\dot\psi(t) = [AH+HA-\tr(H)A]\psi(t),
\end{equation}
 where $H$ is the Hamiltonian and $A$ is a skew-adjoint operator which is 
independent of $H$.
The term in the square brackets replaces here $-iH$ from the Schrödinger 
 equation.
We arrive at this result assuming that the term in the square brackets is an 
 $\reals$-linear expression in $H$ and that it commutes with $H$.

The paper is organized as follows.
In Section~\ref{sec:canonicalqt} we consider the case of standard complex 
 quantum mechanics.
We review the connection between the Schrödinger equation, generators of time 
 shifts, and Stone's theorem and develop then the axioms for the correspondence 
 between the Hamiltonian and the generator of time shifts.
In Theorem~\ref{thm:cphi} we establish that these axioms are sufficient to 
 reproduce the Schrödinger equation.
In Section~\ref{sec:quaternionicqt} we then turn to the quaternionic case.
We first summarize the mathematical preliminaries to formulate quantum theory 
 over the quaternions.
Our main result, the characterization of all possible dynamics in quaternionic 
 quantum theory is provided in Theorem~\ref{thm:qrep}.
Subsequently we discuss alternatives to the axioms that lead to 
 Theorem~\ref{thm:qrep} before we conclude in Section~\ref{sec:conclusions}.

\section{Time evolution in complex quantum theory}\label{sec:canonicalqt}
In quantum mechanics, the time evolution of a system is described by the 
 Schrödinger equation, $i\hbar \dot\psi(t) = H \psi(t)$.
For a time-independent Hamiltonian this gives rise to the unitary time 
 evolution operator
\begin{equation}\label{eq:ust}
 U_t = \ee^{-\frac i\hbar Ht},
\end{equation}
 which provides the solution of the Schrödinger equation via $\psi(t_0+t)= U_t 
 \psi(t_0)$.
From a fundamental perspective it is more natural to first require that the 
 dynamics should be governed by some unitary operators parametrized by $t$ and 
 then conclude that the Schrödinger equation must hold.
This approach is formalized by Stone's representation theorem of unitary 
 groups.

\subsection{Unitary groups and Stone's theorem}
We aim to obtain an evolution operator $U_t\colon \psi(t_0)\mapsto \psi(t_0+t)$ 
 of the same structure as in Eq.~\eqref{eq:ust}, but without building on the 
 Schrödinger equation.
For this we assume that $U_t$ is linear in $\psi(t_0)$, preserves the norm of 
 $\psi(t_0)$, is independent of $t_0$, and is continuous in $t$.
More precisely, $(U_t)_{t\in \reals}$ is a strongly continuous unitary group, 
 that is,
\begin{myenum}
\item[(i)] $U_t$ is a unitary operator for all $t$, with $U_0=\openone$,
\item[(ii)] $U_{s+t}= U_s U_t$ for all $s$, $t$, and
\item[(iii)] $t\mapsto U_t$ is strongly continuous.
\end{myenum}
Condition (i) expresses linearity and isometry and $U_0=\openone$ is used to 
 implement the identity $\psi(t_0)= U_0\psi(t_0+0)$.
Condition (ii) follows from the independence of $t_0$ and condition (iii) is a 
 specification of the assumption of continuity.
Strong continuity refers to the strong operator topology and reduces because of 
 (i) and (ii) to $\lim\limits_{t\to 0} \norm{U_t\psi- \psi}= 0$ for all $\psi$.

The fundamental representation theorem of strongly continuous one-parameter 
 unitary groups is due to Stone (see, for example, 
 Ref.~\onlinecite{Teschl2009}).
For every such group $(U_t)_{t\in\reals}$ there exists a unique skew-adjoint 
 operator $A=-A^\dag$, such that
\begin{equation}\label{eq:cstone}
 U_t = \ee^{At}
\end{equation}
 holds.
This result is valid for general Hilbert spaces, with subtleties occurring if 
 $t\mapsto U_t\psi$ is not differentiable at $t=0$ for some $\psi$.
Clearly, in the finite-dimensional case this function is always differentiable.

\subsection{Hamiltonians and generators of time shifts}\label{sec:cgens}
From a physical perspective, the skew-adjoint operator $A$ in 
 Eq.~\eqref{eq:cstone} is responsible for translations of time and thus it can 
 be seen as the generator of time shifts.
The Schrödinger equation implies that the correspondence between the 
 Hamiltonian and the generator of time shifts is given by $A=-\frac i\hbar H$.
But is this the only way to establish a correspondence between the generator of 
 time shifts and the Hamiltonian and if not, how can we classify the different 
 possibilities?

Note that we make here a subtle distinction between energy observable, 
 Hamiltonian, and generator of time shifts.
The skew-adjoint generator of time shifts is related to the self-adjoint 
 Hamiltonian in a way we are about to discuss.
Since the Hamiltonian is self-adjoint, it has the structure of an observable 
 and we require below that the Hamiltonian has features commonly connected to 
 the energy of a system, in particular that the Hamiltonian is independent 
 under the time evolution that it induces.
In this sense the Hamiltonian can also be seen as the energy observable of the 
 system.

To address this question, we assume some relation $\varphi$ between the 
 Hamiltonian and the generator of time shifts.
Since the former are self-adjoint, that is, $H=H^\dag$, while the latter are 
 skew-adjoint, that is, $A=-A^\dag$, we write $\varphi\subset 
 \Herm(n,\compl)\times \AHerm(n,\compl)$.
Here, $\Herm(n,\compl)$ and $\AHerm(n,\compl)$ denote the $\reals$-vector space 
 of self-adjoint and skew-adjoint complex $n\times n$ matrices, respectively.
For the Schrödinger equation we have the one-to-one correspondence 
 $H\leftrightarrow -\frac i\hbar H$ and the relation $\phi$ is then given by 
 $\varphi_S = \set{ (H,-\frac i\hbar H) | H\in \Herm(n,\compl) }$.

In literature, such a relation $\varphi$ is sometimes called a dynamical 
 correspondence \cite{Alfsen1998, Barnum2014, Branford2018} and different 
 assumptions of the properties of $\varphi$ have been discussed, see, for 
 example, Refs.~\onlinecite{Branford2018, Barnum2014}.
We choose here the following restrictions on $\varphi$ by requiring that 
 $\varphi$ is
\begin{myenum}
\item[(DC1)]
 $\reals$-homogeneous, that is, $(H,A)\in \varphi$ implies $(\lambda H,\lambda 
 A)\in\varphi$ for all $\lambda\in \reals$,
\item[(DC2)]
 additive, that is, $(H,A)\in \varphi$ and $(H',A')\in \varphi$ implies 
 $(H+H',A+A')\in \varphi$, and
\item[(DC3)]
 commuting, that is, $(H,A)\in \varphi$ implies $HA=AH$.
\end{myenum}
The relation $\varphi$ should be $\reals$-homogeneous (DC1) to match the 
 intuition that higher energies correspond in direct proportion to faster time 
 evolutions and vice versa.
In addition the correspondence in classical mechanics as well as in complex 
quantum theory has this property.
The assumption of additivity (DC2) can be easily justified for the case where 
 $H$ and $H'$ commute since in this case we have the addition of two 
 Hamiltonians that differ only in spectrum.
Then for any eigenstate of $H$ and $H'$ an argument similar to the motivation 
 for assumption (DC1) can be applied.
Noncommuting Hamiltonians most prominently appear in the form of interaction 
 Hamiltonians, where $H=H_A+H_B+\mu H_I$.
In many situations the interaction strength $\mu$ is an external experimental 
 parameter, for example, the strength of a magnetic field, and hence additivity 
 is a reasonable assumption.
This reasoning might be more difficult to apply, if the Hamiltonian does not 
 emerge as an effective description, but is an unalterable property of the 
 system.
Finally, the relation $\varphi$ should also be commuting (DC3), so that the 
 Hamiltonian is invariant under time shifts.
That is, the observable $H$ should be a conserved quantity under its own time 
 evolution,
\begin{equation}
 U_t^\dag H U_t = \ee^{At} H \ee^{-At} = H,
\end{equation}
 for all $(H, A)\in \varphi$.
The properties (DC1) and (DC2) make $\varphi$ an $\reals$-linear space and any 
 relation $\varphi$ obeys (DC1)--(DC3) if and only if the set 
 $\set{H+A|(H,A)\in \varphi}$ is an $\reals$-linear subspace of the normal 
 matrices.
Note that we do not include another familiar condition on $\varphi$, namely, 
 that it should be
\begin{myenum}
\item[(DC4)] covariant under unitary transformations,
\end{myenum}
 that is, $(H, A)\in \varphi$ implies $(V^\dag HV,V^\dag A V)\in \varphi$, for 
 any unitary operator $V$.

It is conceivable that the Hamiltonian alone does not determine the time 
 evolution completely and that the map $\varphi_1\colon H\mapsto \set{A| 
 (H,A)\in \varphi}$ is multivalued.
The set $\varphi_1(H)$ could also be empty, in which case $H$ would not 
 correspond to any dynamics, that is, $H$ would be unphysical.
Conversely, the same dynamics might arise from different Hamiltonians so that 
 $\varphi_2: A\mapsto \set{H|(H,A)\in \varphi}$ is multivalued.
Also the set $\varphi_2(A)$ could be empty and hence the corresponding time 
 evolution would be unphysical.
In this paper, we focus on the first case and we only consider maps $\Phi\colon 
 \Herm(n,\compl)\to \AHerm(n,\compl)$ such that $\varphi_1(H)= \set{\Phi(H)}$.

The conditions (DC1)--(DC3) are equivalent to the condition that $\Phi$ is a 
 commuting $\reals$-linear map, that is, $\Phi$ is $\reals$-linear with 
 $H\Phi(H)=\Phi(H)H$.
We now demonstrate that such maps must have a very specific form.
For this we use the following result by Brešar.
\begin{lemma}[Corollary 3.3 in Ref.~\onlinecite{Bresar2004}]\label{l:bresar}
Let $\mathcal A$ be a simple unital ring.
Then every commuting additive map $f$ on $ \mathcal A$ is of the form $f(x) = 
 zx + g(x)$, where $z\in \mathcal Z (\mathcal A)$ and $g\colon \mathcal A \to 
 \mathcal Z(\mathcal A)$ is an additive map.
\end{lemma}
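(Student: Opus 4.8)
The plan is to linearize the commuting condition into a functional identity and then recognize the resulting bilinear object as a biderivation, which for prime rings is completely classified. Throughout write $[a,b]=ab-ba$, so that the hypothesis ``$f$ commuting'' reads $[f(x),x]=0$ for all $x\in\mathcal A$. First I would replace $x$ by $x+y$ and expand using additivity of $f$; the diagonal terms $[f(x),x]$ and $[f(y),y]$ drop out, leaving the polarized identity
\begin{equation}\label{eq:polar}
 [f(x),y]+[f(y),x]=0\qquad\text{for all }x,y\in\mathcal A .
\end{equation}

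Next I would study the biadditive map $B(x,y):=[f(x),y]$. Equation~\eqref{eq:polar} says precisely that $B$ is skew-symmetric, $B(x,y)=-B(y,x)$. On the other hand, for fixed $x$ the map $y\mapsto B(x,y)=[f(x),y]$ is the inner derivation $\operatorname{ad}_{f(x)}$, hence a derivation in its second argument; by skew-symmetry it is then also a derivation in its first argument, so $B$ is a biderivation of $\mathcal A$. Since a unital simple ring is prime and its center $\mathcal Z(\mathcal A)$ coincides with its extended centroid, I would invoke the classification of biderivations of prime rings: every such biderivation is inner, that is, there is some $z\in\mathcal Z(\mathcal A)$ with $B(x,y)=z\,[x,y]$ for all $x,y$. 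Rewriting this as $[f(x)-zx,\,y]=0$ for all $y$ shows that $g(x):=f(x)-zx$ takes values in $\mathcal Z(\mathcal A)$, and $g$ is additive as a difference of additive maps. This yields the asserted form $f(x)=zx+g(x)$.

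The step I expect to be the main obstacle is the classification of biderivations, which is the genuinely deep input and the only place where primeness is really used; the elementary substitution tricks are not enough on their own. For instance, setting $y=\openone$ in \eqref{eq:polar} only reveals that $f(\openone)$ is central and does not by itself identify the leading coefficient $z$ — indeed $z\neq f(\openone)$ in general, as one already sees for $\mathcal A=\quats$ with $f$ the quaternionic conjugation. I would also treat separately the degenerate case in which $\mathcal A$ is commutative: then $[x,y]\equiv 0$, the center is all of $\mathcal A$, and the statement holds trivially with $z=0$ and $g=f$.
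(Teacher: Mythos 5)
The paper does not actually prove this lemma---it is imported verbatim as Corollary~3.3 of the cited Bre\v{s}ar reference---so there is no in-paper argument to compare against; what you have written is, in essence, a correct reconstruction of the standard proof from that literature. Your steps all check out: polarizing $[f(x),x]=0$ gives $[f(x),y]+[f(y),x]=0$; the map $B(x,y)=[f(x),y]$ is then skew-symmetric, is a derivation in $y$ as the inner derivation by $f(x)$, and hence by skew-symmetry also a derivation in $x$; a simple unital ring is prime with extended centroid equal to its center; and once $B(x,y)=z[x,y]=[zx,y]$ is known, $g(x)=f(x)-zx$ is central-valued and additive. The commutative degenerate case and the observation that $z\ne f(\openone)$ in general (your quaternionic-conjugation example on $\quats$ is apt, with $q^*=-q+(q+q^*)$, so $z=-1$) are both handled correctly. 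The one honest caveat, which you yourself flag, is that the entire weight of the argument rests on the Bre\v{s}ar--Martindale--Miers classification of biderivations of noncommutative prime rings, a result of essentially the same depth as the lemma itself; so your write-up is best read as a reduction to that theorem rather than a self-contained proof. Given that the paper treats the lemma as a black box anyway, this level of detail is more than adequate.
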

Here, $\mathcal Z(\mathcal A)$ denotes the center of $\mathcal A$, that is, the 
 elements of $\mathcal A$ that commute with all of $\mathcal A$.
Since $\Mat(n,\compl)$ is a simple unital ring with center $\compl\openone$, we 
 can in principle apply Lemma~\ref{l:bresar} in order to characterize all 
 maps $\Phi$.
However, we first need to extend the domain of $\Phi$ to be all of 
 $\Mat(n,\compl)$.
This is readily achieved by the canonical extension $\Phi_e$ of $\Phi$ via
\begin{equation}
 \Phi_e(M_1 + iM_2)= \Phi(M_1) + i\Phi(M_2),
\end{equation}
 where $X= M_1+iM_2$ is the unique decomposition of $X\in \Mat(n,\compl)$ into 
 its self-adjoint and skew-adjoint part and $M_1$, $M_2$ are self-adjoint 
 matrices.
Clearly this extension is additive ($\compl$-homogeneous) if $\Phi$ is additive 
 ($\reals$-homogeneous).
The extension is also commuting if $\Phi$ is additive and commuting.
Indeed we have $[\Phi(M_1),M_2]+ [\Phi(M_2),M_1]= 0$ due to 
 $[\Phi(M_1+M_2),M_1+M_2]= 0$ and hence $[\Phi_e(X), X]= i([\Phi (M_1), M_2]+ 
 [\Phi(M_2),M_1])= 0$ holds.
This embedding together with Lemma~\ref{l:bresar} yields the following 
 characterization of all maps obeying (DC1)--(DC3).
\begin{theorem}\label{thm:cphi}
Every commuting $\reals$-linear map $\Phi\colon \Herm(n, \compl) \to \AHerm (n, 
 \compl)$ is of the form
\begin{equation} \label{eq:Schroed}
 \Phi(H)= i \lambda H + i\tr(B H)\openone,
\end{equation}
 where $\lambda\in \reals$ and $B\in\Herm(n,\compl)$.
\end{theorem}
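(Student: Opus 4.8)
The plan is to feed the canonical extension $\Phi_e$ into Brešar's Lemma~\ref{l:bresar} and then recover the two pieces of structure that go beyond mere additivity and commutation: the $\compl$-homogeneity of $\Phi_e$ and the fact that $\Phi$ sends self-adjoint operators to skew-adjoint ones. The algebraic heavy lifting is done entirely by the lemma; the remaining work is to translate these two properties into explicit constraints on the parameters it produces.

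As already observed, $\Phi_e$ is additive and commuting, so Lemma~\ref{l:bresar} applies to $\mathcal A=\Mat(n,\compl)$, whose center is $\compl\openone$. Hence there exist $c\in\compl$ and an additive map $\gamma\colon\Mat(n,\compl)\to\compl$ with
\begin{equation}
 \Phi_e(X)= cX + \gamma(X)\openone .
\end{equation}
First I would invoke $\compl$-homogeneity: since $\Phi_e(\alpha X)-\alpha\Phi_e(X)=(\gamma(\alpha X)-\alpha\gamma(X))\openone$ must vanish for every $\alpha\in\compl$, the functional $\gamma$ is $\compl$-linear. Every $\compl$-linear functional on $\Mat(n,\compl)$ is represented by the trace, so $\gamma(X)=\tr(CX)$ for a unique $C\in\Mat(n,\compl)$, giving $\Phi_e(X)=cX+\tr(CX)\openone$.

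It remains to impose that $\Phi(H)=\Phi_e(H)$ is skew-adjoint for every self-adjoint $H$. Writing $c=a+i\lambda$ with $a,\lambda\in\reals$ and using $H^\dag=H$, the condition $\Phi(H)^\dag=-\Phi(H)$ reads $2aH+2\Re\tr(CH)\openone=0$ for all $H\in\Herm(n,\compl)$. For $n\ge2$ I would choose an $H$ that is not a multiple of $\openone$: the term $aH$ must then cancel against a scalar, forcing $a=0$ and hence $c=i\lambda$ (for $n=1$ the claim holds trivially with $B=0$). The leftover requirement $\Re\tr(CH)=0$ for all self-adjoint $H$ then pins down $C$: splitting $C$ into its self-adjoint and skew-adjoint parts and using that $\tr(PH)$ is real for self-adjoint $P$ while $\tr(NH)$ is imaginary for skew-adjoint $N$, the non-degeneracy of the trace form on $\Herm(n,\compl)$ forces the self-adjoint part of $C$ to vanish. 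Thus $C$ is skew-adjoint, so $C=iB$ with $B\in\Herm(n,\compl)$, whence $\tr(CX)=i\tr(BX)$ and $\Phi(H)=i\lambda H+i\tr(BH)\openone$, as claimed.

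I expect the main obstacle to be purely bookkeeping rather than conceptual: keeping track of which trace expressions are real and which are imaginary, and correctly separating the scalar from the non-scalar contribution in the skew-adjointness identity so as to isolate $a$ and then $C$. The one place where genuine care is needed is the dimension count $n\ge2$ used to conclude $a=0$, which is precisely the step that would fail in the degenerate one-dimensional case.
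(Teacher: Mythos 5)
Your proposal is correct and follows essentially the same route as the paper: extend $\Phi$ to $\Phi_e$, apply Brešar's lemma, use $\compl$-linearity to write the central part as $\tr(C\,\cdot)\,\openone$, and impose skew-adjointness to force $c\in i\reals$ and $C\in i\Herm(n,\compl)$. Your version merely spells out in more detail the final step (separating the scalar from the non-scalar contribution and the $n=1$ edge case) that the paper states without elaboration.
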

\begin{proof}
Applying Lemma~\ref{l:bresar} to the extended map $\Phi_e$ yields $\Phi_e(H)= 
 \eta H+\tr(QH)\openone$ where $H\in \Herm(n,\compl)$, $\eta\in \compl$, and 
 $Q\in \Mat(n,\compl)$.
Note that we used the $\compl$-linearity of $\Phi_e$ to write $g(H)= 
 \tr(QH)\openone$.
Since $\Phi_e(H)$ must be skew-adjoint, $\eta+\eta^*=0$ and $Q+Q^\dag=0$ 
 follows, that is, $\eta\in i\reals$ and $Q\in i\Herm(n,\compl)$.
\end{proof}
Note that $\Phi(H)$ is covariant if and only if $B$ is a real multiple of 
 $\openone$.
In this case the choice of $B$ is in direct correspondence to the zero energy 
 of the system.
For any other choice of $B$ and a generic Hamiltonian the second summand of 
 Eq.~\eqref{eq:Schroed} would not transform covariant under the action of 
 unitaries and thus differs from the transformation behavior of the Hamiltonian 
 itself.
Note that the term involving $B$ cannot be measured since it would only cause a 
 global phase shift on the quantum state.
In quantum mechanics, we have $\lambda= -\frac1\hbar$ and $B=0$.
The value of $\lambda$ constitutes a constant of nature, including a sign 
 convention.

\section{Quaternionic quantum theory}\label{sec:quaternionicqt}
We have outlined the formalism for obtaining the Schrödinger equation for the 
 familiar case of complex quantum mechanics.
Our choices and assumptions have been made such that we can extend our 
 considerations to construct a dynamical quantum theory over the quaternions.
We start by summarizing a quaternionic version of quantum theory (see, for 
 example, Ref.~\onlinecite{Adler1995}) and we then proceed by characterizing 
 possible expressions for a Schrödinger-type equation.

\subsection{The quaternions}
The quaternions $\quats$ are an extension of the real and complex numbers.
They form an associative division ring where multiplication is noncommutative.
Any quaternion $q \in \quats$ can be written in the form
\begin{equation}
 q = a_1 + a_2 i + a_3 j + a_4 k,
\end{equation}
 where the coefficients $a_\ell$ are real numbers and $i,j,k$ are the 
 quaternion units, which play a role similar to the complex unit $i$.
The multiplication on $\quats$ is commutative for the real numbers and 
 otherwise determined by
\begin{equation}
 i^2 = j^2 = k^2 =-1 \quad \text{and} \quad ijk=-1,
\end{equation}
 yielding $ij=k=-ji$, $jk=i=-kj$, $ki=j=-ik$.
We identify the complex numbers as a subset of the quaternions by identifying 
 the complex unit $i$ with the quaternion unit $i$.
This allows us to write uniquely $q=a+bj$ for $a,b\in\compl$.
Conjugation is defined similarly to the complex numbers, %
\begin{equation}
 q^* = a_1- a_2i- a_3j- a_4k,
\end{equation}
 yielding the rules $(uv)^* = v^*u^*$, $qq^*=q^*q$, and $(q^*)^* = q$.
The modulus $\abs{q}=\sqrt{q q^*}$ induces the euclidean norm 
 $\norm{(a_1,a_2,a_3,a_4)}= \abs{a_1+a_2i+a_3j+a_4k}$.
This way, the quaternions are a complete normed $\reals$-algebra.

\subsection{Modules and matrices}
Standard complex quantum theory is formulated on the basis of complex Hilbert 
 spaces.
For quaternionic quantum theory one uses a similar structure while taking into 
 account the noncommutativity of the quaternions.
Accordingly, one considers the $n$-fold direct product of quaternions, denoted 
 by $\quats^n$.
It forms a free bimodule and possesses, apart from commutativity, most 
 properties of a vector space.
In particular, since it arises from a direct product, it can be equipped with 
 the canonical basis $(e^{(1)},e^{(2)},\dotsc,e^{(n)})$.

For $x,y\in \quats^n$ with $x=\sum_i x_i e^{(i)}$ and $y$ similar we define the 
 inner product to be of the canonical form $\exv{x,y}= \sum_i x^*_i y_i = 
 \exv{y,x}^*$, giving also rise to the norm $\norm{x}=\sqrt{\exv{x,x}}$ and 
 turning $\quats^n$ into a Hilbert module.
For scalar multiplication with $\alpha\in \quats$ we obtain the rules 
 $\exv{x\alpha,y}=\alpha^* \exv{x,y}$ and $\exv{x,y\alpha}= \exv{x,y}\alpha$.
This suggests that scalar multiplication in $\quats^n$ is preferably taken from 
 the right, although, technically, $\quats^n$ is a bimodule.

We take linear maps $M\colon \quats^n\to \quats^m$ to be right-homogeneous, 
 $M(x\alpha)=M(x)\alpha$ which allows for a representation of $M$ as an 
 $m\times n$ matrix $(M_{i,j})_{i,j}$ via $M_{i,j}= \exv{e^{(i)},Me^{(j)}}$.
Then $\exv{x,M(y\alpha)}= \sum_{i,j} x_i^* M_{i,j} y_j \alpha$.
We consider $\Mat(n, \quats)$ as an $\reals$-algebra and, where unambiguous, we 
 use $\alpha\in \quats$ also as the linear map $x\mapsto \sum_i \alpha x_i 
 e^{(i)}$.
The adjoint $\dag$ of a linear map is defined as usual, $\exv{x,M(y)}= 
 \exv{M^\dag(x),y}$, and therefore $(M^\dag)_{i,j}=(M_{j,i}^*)_{i,j}$.
Since linear maps are well-represented by matrices, we mostly use the latter 
 notion.
Self-adjoint and skew-adjoint matrices are defined in the obvious way.
A unitary matrix obeys $U^\dag U=\openone$ and we mention that this condition 
 is equivalent to \cite{Zhang1997} $U U^\dag = \openone$.
If $M$ is normal, $M^\dag M=M M^\dag$, then there exits a diagonal matrix $D$ 
 with entries in $\set{a+bi | a,b\in \reals,\; b\ge 0}$ and a unitary matrix 
 $U$ such that \cite{Zhang1997} $M=U^\dag D U$.
The trace $\tr(M)=\sum_i M_{i,i}$ for self-adjoint matrices is invariant under 
 unitary transformations \cite{Aleksandar1995}, $\tr(UMU^\dag)=\tr(M)$, and 
 thus equivalent to the sum of diagonal elements of $D$.

It is sometimes convenient to use the embedding $\Lambda\colon \Mat(n, \quats) 
 \to \Mat(2n, \compl)$,
\begin{equation}
 \Lambda\colon A+Bj \mapsto
 \begin{pmatrix} A & B \\ -B^* & A^* \end{pmatrix}.
\end{equation}
This map is a $\dag$-monomorphism of the corresponding $\reals$-algebras.
In particular, we have $\Lambda[(r A+BC^\dag)]= r\Lambda(A)+ 
 \Lambda(B)\Lambda(C)^\dag$ for $r\in \reals$ and $A,B,C\in \Mat(n,\quats)$.
The map $\Lambda^{-1}\colon \Mat(2n,\compl)\to \Mat(n,\quats)$,
\begin{equation}
 \Lambda^{-1}\colon \begin{pmatrix} A&B \\ C&D\end{pmatrix}
 \mapsto A+Bj,
\end{equation}
 is an $\reals$-linear left inverse of $\Lambda$, which, however, it is not 
 preserving the algebraic properties of $\Mat(2n,\compl)$.

\subsection{Stone's theorem}
In order to study the dynamics in quaternionic quantum theory we proceed 
 similarly to the complex case by studying continuous unitary groups $(U_t)_t$.
Then, an analogous result to Stone's theorem holds.
\begin{theorem}\label{thm:qstone}
For every continuous unitary group $(U_t)_{t \in \reals} \subset \Mat(n, 
 \quats)$ there exists a unique skew-adjoint matrix $A\in \Mat(n, \quats)$ such 
 that $U_t = \ee^{At}$.
\end{theorem}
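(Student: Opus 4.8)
The plan is to transfer the problem to the complex setting through the $\dag$-mono\-morphism $\Lambda\colon\Mat(n,\quats)\to\Mat(2n,\compl)$ and to invoke the complex Stone theorem of Eq.~\eqref{eq:cstone}. First I would check that $(\Lambda(U_t))_{t\in\reals}$ is a continuous one-parameter unitary group in $\Mat(2n,\compl)$: unitarity follows from $\Lambda(U_t)^\dag\Lambda(U_t)=\Lambda(U_t^\dag U_t)=\Lambda(\openone)=\openone$, the group law from multiplicativity $\Lambda(U_{s+t})=\Lambda(U_sU_t)=\Lambda(U_s)\Lambda(U_t)$, and continuity from the fact that $\Lambda$ is $\reals$-linear and hence continuous on this finite-dimensional space. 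The complex Stone theorem then supplies a unique skew-adjoint $\tilde A\in\Mat(2n,\compl)$ with $\Lambda(U_t)=\ee^{\tilde A t}$; in particular $t\mapsto\Lambda(U_t)$ is differentiable and $\tilde A=\frac{\dd}{\dd t}\big|_{t=0}\Lambda(U_t)$.

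The crux of the argument is to show that $\tilde A$ lies in the image of $\Lambda$, so that it descends to a quaternionic generator. I would argue this by a closedness-and-differentiation argument: the image $\Lambda(\Mat(n,\quats))$ is an $\reals$-linear, hence closed, subspace of $\Mat(2n,\compl)$ that contains $\openone=\Lambda(\openone)$; since each difference quotient $[\Lambda(U_t)-\openone]/t=\Lambda\big([U_t-\openone]/t\big)$ belongs to this subspace, so does its limit $\tilde A$. Consequently there is a unique $A\in\Mat(n,\quats)$ with $\tilde A=\Lambda(A)$, the uniqueness stemming from injectivity of $\Lambda$. Alternatively one could characterize the image as the fixed points of the conjugation $M\mapsto J\bar M J^{-1}$ with $J=\left(\begin{smallmatrix}0&\openone\\-\openone&0\end{smallmatrix}\right)$ and verify that $\tilde A$ obeys this symmetry.

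It then remains to collect the consequences. Skew-adjointness of $A$ follows from that of $\tilde A$: since $\Lambda$ preserves the adjoint, $\Lambda(A^\dag)=\Lambda(A)^\dag=\tilde A^\dag=-\tilde A=\Lambda(-A)$, and injectivity gives $A^\dag=-A$. The identity $U_t=\ee^{At}$ follows because $\Lambda$, being a continuous unital $\reals$-algebra homomorphism, commutes with the convergent exponential series, so that $\Lambda(\ee^{At})=\ee^{\Lambda(A)t}=\ee^{\tilde A t}=\Lambda(U_t)$, and injectivity of $\Lambda$ yields the claim. Uniqueness of $A$ proceeds the same way: if $U_t=\ee^{At}=\ee^{A't}$ for skew-adjoint $A,A'$, then $\ee^{\Lambda(A)t}=\ee^{\Lambda(A')t}$, so the uniqueness clause of the complex Stone theorem gives $\Lambda(A)=\Lambda(A')$ and hence $A=A'$.

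I expect the main obstacle to be the middle step, namely establishing that the complex generator $\tilde A$ respects the quaternionic structure and thus lies in $\Lambda(\Mat(n,\quats))$; everything else is a routine transfer of the complex statement through the $\dag$-mono\-morphism. The differentiation argument makes this transparent in the finite-dimensional case, where continuity of $t\mapsto U_t$ already forces differentiability, so that no separate regularity discussion is required.
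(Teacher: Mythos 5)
Your proposal is correct and follows essentially the same route as the paper: transfer the group through the $\dag$-monomorphism $\Lambda$, apply the complex Stone theorem, and pull the generator back to $\Mat(n,\quats)$. The only (immaterial) difference is how you establish that the complex generator descends: you argue that the closed subspace $\Lambda(\Mat(n,\quats))$ contains the difference quotients and hence their limit, whereas the paper differentiates $\Lambda^{-1}(\Lambda(U_t))$ using the explicit $\reals$-linear left inverse and then identifies $\Lambda(A)=B$ from the differential equation.
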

This theorem has been proved in Ref.~\onlinecite{Finkelstein1962}.
For completeness, we provide here a proof for finite dimensions.
\begin{proof}[Proof of Theorem~\ref{thm:qstone}]
Since the embedding $\Lambda$ is a mapping between finite-dimensional vector 
 spaces, the family $[\Lambda(U_t)]_t\subset \Mat(2n,\compl)$ is also a 
 continuous unitary group.
By virtue of Stone's theorem there exists a skew-adjoint matrix $B\in 
 \Mat(2n,\compl)$ such that $\Lambda(U_t)= \ee^{Bt}$.
The map $\mathcal W\colon t\mapsto \Lambda(U_t)$ is also differentiable, so 
 that we can write for $\mathcal U\colon t\mapsto U_t$,
\begin{equation}
 \Lambda^{-1}(\dot{\mathcal W})
 = \frac{\dd}{\dd t} \Lambda^{-1} (\mathcal W)
 = \dot{\mathcal U}.
\end{equation}
The left hand side exists, proving that also $\mathcal U$ is differentiable.
By letting $A= \dot{\mathcal U}(0)$ and using $U_{t+\delta}= U_\delta U_t$, we 
 have
\begin{equation}
 \dot{\mathcal U}(t) = \lim_{\delta \to 0}
 \frac{U_{t + \delta} - U_t}{\delta}= AU_t.
\end{equation}
It remains to show that $A$ is skew-adjoint and satisfies $U_t=\ee^{At}$.
Since $\Lambda(U_t)$ is unitary, the identity $\Lambda(A)\Lambda(\mathcal U)= 
 \Lambda(A\mathcal U) = \Lambda(\dot{\mathcal U})=\frac\dd{\dd t} 
 \Lambda(\mathcal U) = \dot{\mathcal W}= B\mathcal W = B\Lambda(\mathcal U)$, 
 allows us to conclude that $\Lambda(A)=B$.
This implies that $A$ is skew-adjoint, since $B$ is.
Finally, applying $\Lambda^{-1}$ to
\begin{equation}
 \Lambda(U_t)= \ee^{Bt}= \ee^{\Lambda(A)t}=\Lambda(\ee^{At})
\end{equation}
 from the left gives us $U_t=\ee^{At}$.
Uniqueness then follows immediately from $\dot{\mathcal U}(0)=A$.
\end{proof}
We mention that the smoothness of the map $t\mapsto U_t$, which we show in the 
 first part of the above proof, is a simple consequence of the fact that the 
 unitary matrices form a Lie group.
Indeed, for any Lie group $G$ a continuous homomorphism $\reals \to G$ is 
 necessarily smooth \cite{Lee2012}.
Also note that in contrast to the complex case, we consider here only the 
 finite-dimensional case and therefore it is not necessary to use strong 
 continuity.

\subsection{Hamiltonians and generators of time shifts}
We now head for the characterization of the dynamics in a quaternionic version 
 of quantum theory.
So far we have obtained a result about the structure of all possible dynamical 
 evolutions.
But for a dynamical evolution to be useful we need to specify a notion of 
 states and observables.
Here, we proceed in complete analogy to quantum theory, that is, states are 
 represented by normalized vectors and observables by self-adjoint matrices 
 \cite{Adler1995}.
The expectation value of an observable $H$ for a system in state $\psi$ is 
 defined as $\exv{H}= \exv{\psi,H\psi}$.
This yields always a real number due to $\exv{H}=\exv{H\psi,\psi}^*= \exv{\psi, 
 H^\dag \psi}^*= \exv{H}^*$, and all states $\psi\alpha$ are equivalent for all 
 $\alpha\in \quats$ with $\abs\alpha= 1$.
The spectral theorem for self-adjoint matrices can also be written as $H= 
 \sum_k h_k \Pi_k$, with distinct eigenvalues $h_k\in \reals$ and self-adjoint 
 projections $\Pi_k$ such that $\sum_k\Pi_k=\openone$.
The expectation values of the projections correspond then to the probability 
 $p_k$ for observing the eigenvalue $h_k$, that is, $p_k=\exv{\Pi_k}$.
In this way we recover a large bit of the structure and physical interpretation 
 of quantum theory.

In analogy to the complex case, we assume that the time evolution of a state is 
 generated by a continuous unitary group $(U_t)_t$ and by virtue of 
 Theorem~\ref{thm:qstone} we have $U_t=\ee^{At}$.
We consider now the correspondence between Hamiltonians $H$ and generators of 
 time shifts $A$.
It is worth mentioning that also continuous groups of nonlinear isometries are 
 possible.
For example, such a group can be constructed from a continuous unitary group 
 $(U_t)_t$ as $(R[U_t])_t$, where $R[M]\colon x\mapsto \sum_{i,k} x_i 
 (M^\dag)_{i,k} e^{(k)}= (M x^*)^*$ is an additive map that is not 
 right-homogeneous.
In order to maintain the analogy to standard complex quantum theory, here we do 
 not consider such nonlinear isometries.

In the complex case, the multiplication with a purely imaginary number 
 $i\lambda$ is the right choice to establish this correspondence between 
 Hamiltonian $H$ and generator $A$.
The generator $A$ can written in the polar decomposition as 
 \cite{Finkelstein1962} $A=-XH$, where $X$ is unitary and skew-adjoint, $H$ is 
 self-adjoint and positive semidefinite, and $[X,H]= 0$ holds.
It is conceivable to identify $H$ in this decomposition with the Hamiltonian of 
 the system while $X$ is kept constant.
However, this limits the possible set of Hamiltonians to those with $[X,H]= 0$ 
 which basically reduces quaternionic quantum theory to standard complex 
 quantum theory \cite{Finkelstein1962}.

Here, we are interested in the case where the Hamiltonian of the system can be 
 any self-adjoint operator.
The discussion in Section~\ref{sec:cgens} remains valid and leaves us with the 
 task to characterize the commuting $\reals$-linear maps $\Phi\colon 
 \Herm(n,\quats) \to \AHerm(n,\quats)$.
However, we cannot proceed similarly to above to obtain a result akin to 
 Theorem~\ref{thm:cphi}.
The main difficulty here is that it is not possible to use an extension of 
 $\Phi$ as in Section~\ref{sec:cgens}, since such an extended map would be no 
 longer commuting.
We hence resort to a case-by-case study for different dimensions $n$.

For this it is useful to note that determining all admissible maps $\Phi$ can 
 be reduced to finding the kernel of an $\reals$-linear map.
Indeed, $\Phi$ is commuting if and only if the $\reals$-bilinear map $\mathcal 
 Q_\Phi\colon (X,Y)\mapsto [\Phi(X),Y]+[\Phi(Y),X]$ is trivial for all $X,Y\in 
 \Herm(n,\quats)$, that is, $\mathcal Q_\Phi=0$.
Here, sufficiency follows immediately for $Y=X$ and necessity from 
 $[\Phi(X+Y),X+Y]=0$.
Thus, the set of commuting maps $\Phi$ is determined by the kernel of the 
 $\reals$-linear map $\Phi\mapsto \mathcal Q_\Phi$.
In more detail, to compute this kernel first observe that $[ \Herm(n,\quats), 
\AHerm(n,\quats) ] \subset \Herm( n, \quats) $.
Fixing a basis $(t_{i})_{i}$ and $(s_a)_a$ of the real vector spaces of 
 $\Herm(n, \quats)$ and $\AHerm(n, \quats)$, respectively, allows us to write 
 $[s_a, t_i] = \sum_k C_{a,i}^k t_{k}$ with real coefficients $C_{a,i}^k$ and 
 $\Phi(t_\ell)=\sum_{a} P^\Phi_{\ell,a}s_a$ with real coefficients 
 $P^\Phi_{\ell,a}$.
We define the matrix
\begin{equation}
  \mathcal X_{(i,j,k),(\ell,a)}
  = C_{a,j}^k\delta_{i,\ell}+C_{a,i}^k\delta_{j,\ell}
\end{equation}
 with row index $(i,j,k)$, column index $(\ell,a)$ and $\delta_{i,j}$ denoting 
 the Kronecker delta.
Then $\mathcal Q_\Phi(t_i,t_j)=\sum_{k,\ell,a} \mathcal X_{(i,j,k),(\ell,a)} 
 P^\Phi_{\ell,a} t_k$ and therefore the matrix $\mathcal X$ corresponds to the 
 map $\Phi\mapsto Q_\Phi$.

We compute the matrix $\mathcal X$ and its kernel with the help of computer 
 algebra for $n=1,2,3$ with the following results.
For $n=1$, all admissible maps are (obviously) given by $\Phi_1(H)= \alpha H$ 
 for any $\alpha\in \quats$ with $\alpha = -\alpha^*$.
For $n=2$, all admissible maps are of the form $\Phi_2(H)= AH+HA-\tr(H)A$, 
 where $A\in \AHerm(2,\quats)$ is arbitrary.
For $n=3$, only the trivial map $\Phi=0$ is commuting.
This implies also that for $n>3$ all commuting maps must be trivial, as it 
 follows from the following observation.
\begin{lemma}
Any nontrivial commuting $\reals$-linear map $\Phi\colon \Herm(n,\quats)\to 
 \AHerm(n,\quats)$ with $n> 3$ induces a nontrivial commuting $\reals$-linear 
 map $\Phi_3\colon \Herm(3,\quats)\to \AHerm(3,\quats)$.
\end{lemma}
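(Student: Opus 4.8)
The plan is to realise $\Phi_3$ as a compression of $\Phi$ to a suitably chosen principal $3\times 3$ block and to show that the commuting property forces the nontriviality of $\Phi$ to survive this compression. For a three-element index set $S\subset\{1,\dots,n\}$ let $\iota_S\colon\Herm(3,\quats)\to\Herm(n,\quats)$ place a matrix into the principal block indexed by $S$ and pad the rest with zeros, and let $\pi_S\colon\Mat(n,\quats)\to\Mat(3,\quats)$ extract the $S\times S$ principal submatrix. Both maps are $\reals$-linear, $\iota_S$ sends self-adjoint to self-adjoint, and $\pi_S$ sends skew-adjoint to skew-adjoint, so $\Phi_S:=\pi_S\circ\Phi\circ\iota_S$ is an $\reals$-linear map $\Herm(3,\quats)\to\AHerm(3,\quats)$. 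First I would check that $\Phi_S$ is commuting: since $\iota_S(X)$ is supported on the $S$-block, restricting the identity $[\Phi(\iota_S X),\iota_S X]=0$ to its $S\times S$ block gives exactly $[\Phi_S(X),X]=0$, where one uses that for indices in $S$ the block of the product $\Phi(\iota_S X)\,\iota_S X$ equals $\Phi_S(X)\,X$, and similarly for the reversed product.

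The heart of the argument is to produce an $S$ for which $\Phi_S\neq 0$. Because $\Herm(n,\quats)$ has a basis whose elements are each supported on at most two coordinates, namely the coordinate projectors $E_p$ and the off-diagonal generators carrying a single quaternion unit in one entry, $\Phi\neq 0$ yields such a basis element $t$, supported on $T$ with $\abs{T}\le 2$, and with $G:=\Phi(t)\neq 0$. Note that $\iota_S\pi_S t=t$ whenever $T\subseteq S$, so in that case $\Phi_S(\pi_S t)=\pi_S G$, the $S\times S$ block of $G$. It therefore suffices to find $S\supseteq T$ with $\abs{S}\le 3$ containing the indices of some nonzero entry of $G$. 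If $G$ has a nonzero diagonal entry $G_{pp}$, then $S=T\cup\{p\}$ works, and if $G$ has a nonzero off-diagonal entry $G_{pq}$ with $\{p,q\}\cap T\neq\emptyset$, then $T\cup\{p,q\}$ again has at most three elements. The only obstruction is a nonzero entry $G_{pq}$ with $p\neq q$ and both $p,q\notin T$, since then $T\cup\{p,q\}$ has four elements.

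The main obstacle is thus precisely this case, and I would eliminate it using the commuting condition $\mathcal Q_\Phi(t,E_p)=0$, with $E_p\in\Herm(n,\quats)$ the projector onto the $p$-th coordinate. Inspecting the $(q,p)$ entry, the contribution $[\Phi(E_p),t]_{qp}$ vanishes for support reasons alone: since $p,q\notin T$, the matrix $t$ has vanishing column $p$ and vanishing row $q$, so both $(\Phi(E_p)\,t)_{qp}$ and $(t\,\Phi(E_p))_{qp}$ are zero regardless of $\Phi(E_p)$. Hence $0=\mathcal Q_\Phi(t,E_p)_{qp}=[G,E_p]_{qp}=G_{qp}$, and skew-adjointness gives $G_{pq}=-G_{qp}^{*}=0$ as well. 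Consequently no nonzero entry of $G$ can have both indices outside $T$, so the obstructing case never arises; every nonzero entry of $G$ is either on the diagonal or has an index in $T$, and in each case one finds $S\supseteq T$ with $\abs{S}\le 3$ (padding with arbitrary extra indices if fewer than three occur) such that $\Phi_S(\pi_S t)=G|_{S\times S}\neq 0$. This $\Phi_S$ is the desired nontrivial commuting $\reals$-linear map $\Phi_3$.
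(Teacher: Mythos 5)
Your proof is correct, and while it shares the paper's overall architecture --- compressing $\Phi$ to a three-dimensional sub-module via $H\mapsto \tau^\dag\Phi(\tau H\tau^\dag)\tau$ for an isometry $\tau\colon\quats^3\to\quats^n$ --- the way you secure nontriviality is genuinely different. The paper exploits that $\Herm(n,\quats)$ is spanned by the rank-one matrices $X=(x_ix_j^*)_{i,j}$ and picks witnesses $y,z$ with $\exv{y,\Phi(X)z}\ne 0$; since $X$ ``lives'' on the single direction $x$, the three vectors $x,y,z$ already span a suitable subspace and Gram--Schmidt produces $\tau$, so nontriviality of the compression is immediate. You instead restrict to coordinate isometries and test $\Phi$ on basis elements $t$ supported on at most two coordinates; a nonzero entry $G_{pq}$ of $G=\Phi(t)$ could then a priori require four coordinates in total, and you close this gap with the extra observation that $\mathcal Q_\Phi(t,E_p)_{qp}=G_{qp}$ when $p,q\notin T$, forcing such entries to vanish. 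That commutator step is the genuinely new ingredient; it is a correct and somewhat pleasing structural fact about commuting maps (the image of a locally supported element cannot have entries entirely off its support), at the cost of a longer case analysis. The paper's choice of generators buys a shorter proof; your choice buys a completely explicit, coordinate-level construction of $\Phi_3$ and avoids any appeal to the spanning property of rank-one matrices.
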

\begin{proof}
The real span of the matrices of the form $(u_iu_j^*)_{i,j}$ with $u\in 
 \quats^n$ is $\Herm(n,\quats)$.
Hence, given a nontrivial map $\Phi$, we can choose linearly independent 
 vectors $x, y, z\in \quats^n$ such that $\exv{y, \Phi(X) z}\ne 0$ with 
 $X_{i,j}=x_i x_j^*$.
Then there is a linear isometry $\tau\colon \quats^3\to \quats^n$ such that 
 $x,y,z\in \tau\quats^3$ and $\pi=\tau\tau^\dag$ acts as identity on $x$, $y$, 
 and $z$.
Such an isometry can be constructed by means of the Gram--Schmidt procedure, 
 yielding orthonormal vectors which are then used as columns of the matrix of 
 $\tau$.
We define the map $\Phi_3$ as $\Phi_3(H)= \tau^\dag \Phi(\tau H\tau^\dag)\tau$.
By construction, this map is $\reals$-linear and maps self-adjoint matrices to 
 skew-adjoint matrices.
Since $\Phi$ is commuting, we have $[ \Phi(\tau H\tau^\dag), \tau 
 H\tau^\dag]=0$ for any $H\in \Herm(3,\quats)$ and by multiplication with 
 $\tau^\dag$ from the left and $\tau$ from the right, it follows that $\Phi_3$ 
 is also commuting.
Finally, $\Phi$ is nontrivial for $X'=\tau^\dag X\tau$, $y'=\tau^\dag y$, and 
 $z'=\tau^\dag z$ in that we have $\exv{y',\Phi_3(X')z'}= \exv{\pi y, \Phi(\pi 
 X\pi)\pi z} =\exv{y,\Phi(X)z}\ne 0$, where $\pi X\pi=X$ follows from $\pi X 
 \pi v= \pi x\exv{x,\pi v}= x\exv{\pi x, v}= Xv$ for all $v\in \quats^n$.
\end{proof}

In summary we have the following characterization.
\begin{theorem}\label{thm:qrep}
For $n=1,2$, every commuting $\reals$-linear map $\Phi\colon \Herm(n,\quats)\to 
 \AHerm(n,\quats)$ is of the form
\begin{equation}
 \Phi(H)= AH+HA-\tr(H)A,
\end{equation}
 where $A\in \AHerm(n,\quats)$.
Conversely, for $n=1,2$ every map of this form is commuting.
For $n>2$ every commuting $\reals$-linear map is trivial, $\Phi=0$.
\end{theorem}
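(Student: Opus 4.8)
The plan is to assemble the theorem from the explicit kernels of $\mathcal X$ computed above together with the preceding Lemma, and to settle the converse by a short direct calculation. For the characterization in low dimensions I would read off the cases $n=1,2$ from the computed kernels. For $n=2$ the kernel already appears in the advertised form $\Phi(H)=AH+HA-\tr(H)A$ with $A\in\AHerm(2,\quats)$, so nothing further is needed. For $n=1$ I would reconcile the computed family $\Phi_1(H)=\alpha H$, $\alpha=-\alpha^*$, with this form: a self-adjoint $1\times1$ matrix satisfies $H\in\reals$ and $\tr(H)=H$, whence $AH+HA-\tr(H)A=AH$, so choosing $A=\alpha$ reproduces $\Phi_1$ exactly.

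For $n>2$ I would argue by reduction to dimension three. The case $n=3$ is decided by the explicit computation, which leaves only $\Phi=0$. For $n\ge 4$ I would invoke the contrapositive of the Lemma: a nontrivial commuting $\reals$-linear map on $\Herm(n,\quats)$ with $n>3$ would induce a nontrivial commuting map $\Phi_3$ on $\Herm(3,\quats)$, contradicting the $n=3$ result. Hence every commuting $\reals$-linear map is trivial for all $n>2$.

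The converse for $n=1,2$ is the only genuinely computational step. First, $\Phi(H)=AH+HA-\tr(H)A$ is skew-adjoint because $A^\dag=-A$ and $\tr(H)\in\reals$, so $\Phi$ indeed maps into $\AHerm(n,\quats)$. To see that $\Phi$ is commuting I would verify $[\Phi(H),H]=0$ directly: expanding gives $[\Phi(H),H]=[A,H^2]-\tr(H)[A,H]=[A,\,H^2-\tr(H)H]$. The decisive point is that for $n\le2$ the combination $H^2-\tr(H)H$ is a real multiple of the identity; writing $H=\left(\begin{smallmatrix}a&q\\ q^*&d\end{smallmatrix}\right)$ with $a,d\in\reals$ and $q\in\quats$, a direct multiplication yields $H^2-\tr(H)H=(\abs{q}^2-ad)\openone$. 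Since this commutes with $A$, we obtain $[\Phi(H),H]=0$ and $\Phi$ is commuting.

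I do not expect a serious conceptual obstacle, since the computational heart of the forward direction has been delegated to the kernel analysis of $\mathcal X$ and the extension past $n=3$ to the Lemma. The only subtle ingredient is the scalar identity $H^2-\tr(H)H\propto\openone$ underlying the converse, which is special to $n\le2$ and fails for $n\ge3$, consistently with triviality there; the remaining risk lies in the correctness of the underlying computer-algebra kernels, which I would want to re-derive by hand for $n=2$ and $n=3$.
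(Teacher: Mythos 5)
Your proposal is correct and follows essentially the same route as the paper: the theorem is assembled from the computer-algebra kernel computations for $n=1,2,3$ together with the reduction Lemma for $n>3$, exactly as you describe. Your explicit verification of the converse via $[\Phi(H),H]=[A,\,H^2-\tr(H)H]$ and the Cayley--Hamilton-type identity $H^2-\tr(H)H=(\abs{q}^2-ad)\openone$ for $n\le 2$ is a welcome hand-checkable supplement to what the paper leaves implicit in the kernel computation, but it is not a different approach.
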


For a two-level system, $n=2$, the rank of $\Phi$ can be at most
 $\dim[\Herm(n,\quats)]-1=5$ due to $\Phi(\openone)=0$, for any choice of $A$.
The maximal rank is achieved, for example, for $A$ being the diagonal matrix 
 with diagonal $(i,0)$.
A more intuitive choice for $A$ might be $-\frac i{2\hbar}$, which yields the 
 quaternionic Schrödinger-type equation
\begin{equation}
 i\hbar\dot\psi(t)= [H_c- \tfrac12\tr(H_c)\openone]\psi(t),
\end{equation}
 where $H_c=\frac12(H+iHi^*)$ is the complex part of the matrix $H$.
Hence the corresponding map $\Phi$ has an $\reals$-rank of only 3 and the 
 dynamics reduces to the well-known case of two-level quaternionic systems 
 where all time evolution is generated solely by the complex 
 Hamiltonians\cite{Brody2011}.

Theorem~\ref{thm:qrep} has been obtained using the conditions (DC1)--(DC3).
In contrast to the complex case, the resulting map $\Phi$ only obeys the 
 covariance condition (DC4) if it is trivial.
This can be seen by defining $\Phi_V\colon H\mapsto V\Phi(V^\dag HV)V^\dag= 
 A_VH+HA_V-\tr(H)A_V$ with $A_V=V A V^\dag$.
Covariance requires then $\Phi_V=\Phi$ for any unitary matrix $V$.
Without loss of generality, we can choose $A=iD$ with $D$ a real diagonal 
 matrix and then the case $V=j$ leads to $\Phi_V=-\Phi$.

However, the no-go statement in Theorem~\ref{thm:qrep} can be avoided by 
 loosening our assumptions.
For example, one can drop the assumption of additivity (DC2).
Then a rather natural such candidate can be achieved as follows.
We fix a spectral decomposition $H=U_H^\dag D_H U_H$ for every $H$ such that 
 $U_{rH}$ is independent of $r\in \reals$.
This allows us to define the $\reals$-homogeneous map $\Psi\colon 
 \Herm(n,\quats)\to \AHerm(n, \quats)$ as
\begin{equation}
 \Psi\colon H\mapsto U_H^\dag i D_H U_H,
\end{equation}
 which can be easily seen to be commuting, but, in general, fails to be 
 additive, due to Theorem~\ref{thm:qrep}.
We can even satisfy the covariance condition (DC4) by requiring that 
 $U_{VHV^\dag}=VU_H$ for all unitaries $V$.
As mentioned before, another way to evade Theorem~\ref{thm:qrep} is to limit 
 the set of Hamiltonians to be purely complex \cite{Finkelstein1962}, $[H,i]= 
 0$.
Then $\Phi_S\colon H\mapsto -\frac i\hbar H$ is admissible under (DC1)--(DC3) 
 and we obtain the Schrödinger equation also in the quaternionic case.

\section{Conclusions}\label{sec:conclusions}
We studied the structure of universal dynamics in quantum theory using three 
 main axioms (DC1)--(DC3).
These axioms prove to be sufficient in order to recover the Schrödinger 
 equation for the case of standard complex quantum theory but when applied to 
 quaternionic quantum theory they yield nontrivial dynamics only for dimension 
 one and two.
For two-level systems, the resulting Schrödinger-type equation is not unique 
 but can be modified by the choice of a skew-adjoint operator, see 
 Theorem~\ref{thm:qrep}.

This makes quaternionic quantum theory for two-level systems exceptionally 
 interesting.
For higher dimensions, a possible conclusion from our analysis is to discard
 quaternionic quantum theory.
However, it should be noted that the main reason for our no-go result is axiom 
 (DC2), which requires that the sum of two Hamiltonians should correspond to 
 the sum of two generators of time shifts.
While this axiom is natural at least in standard complex quantum theory it is 
 an open question whether it is expendable nonetheless, and what this would 
 imply for complex quantum theory as well as quaternionic quantum theory.

\begin{acknowledgments}
This work was supported by
the DFG,
the FQXi Large Grant “The Observer Observed: A Bayesian Route to the 
Reconstruction of Quantum Theory”, and
the ERC (Consolidator Grant 683107/TempoQ).
\end{acknowledgments}

\bibliography{the}

\end{document}